\documentclass[12pt]{article}
\usepackage{amssymb,amsmath,amsfonts}
\usepackage{amsthm} 
\usepackage{esvect} 
\usepackage{bm} 
\usepackage{graphicx} 
\usepackage{comment}
\usepackage{lmodern}
\usepackage{todonotes}
\usepackage{url}
\usepackage{pdfsync}
\usepackage{adjustbox}
\usepackage{multirow}
\usepackage{bbm}
\usepackage{algorithm}
\usepackage[noend]{algpseudocode}
\usepackage{amsopn}

\newcommand{\E}{{\mathbb E}}

\newcommand{\Q}{{\mathbb Q}}
\newcommand{\C}{{\mathbb C}}
\newcommand{\R}{{\mathbb R}}
\newcommand{\N}{{\mathbb N}}
\newcommand{\Z}{{\mathbb Z}}

\newcommand{\Fcal}{{\mathcal F}}
\newcommand{\Ucal}{{\mathcal U}}
\newcommand{\Scal}{{\mathcal S}}

\newtheorem{proposition}{Proposition}[section]
\newtheorem{lemma}[proposition]{Lemma}
\newtheorem{theorem}[proposition]{Theorem}
\newtheorem{definition}[proposition]{Definition}
\newtheorem{corollary}[proposition]{Corollary}
\newtheorem{remark}[proposition]{Remark}

\begin{document}

\title{Unspanned Stochastic Volatility in the Multi-Factor CIR Model\footnote{We thank Scott Joslin, Anders Trolle, and two anonymous referees for their comments. The research leading to these results has received funding from the European Research Council under the European Union's Seventh Framework Programme (FP/2007-2013) / ERC Grant Agreement n. 307465-POLYTE.}}
\author{Damir Filipovi\'c\footnote{EPFL and Swiss Finance Institute, 1015 Lausanne, Switzerland. {\it Email: }damir.filipovic@epfl.ch} \and Martin Larsson\footnote{ETH Zurich, Department of Mathematics, R\"amistrasse 101, CH-8092, Zurich, Switzerland. {\it Email: }martin.larsson@math.ethz.ch} \and Francesco Statti \footnote{EPFL, 1015 Lausanne, Switzerland. {\it Email: }francesco.statti@epfl.ch}}
\date{April 13, 2018\\forthcoming in {\it Mathematical Finance}
}

\maketitle

\begin{abstract}
Empirical evidence suggests that fixed income markets exhibit unspanned stochastic volatility (USV), that is, that one cannot fully hedge volatility risk solely using a portfolio of bonds. While \cite{collin2002bonds} showed that no two-factor Cox-Ingersoll-Ross (CIR) model can exhibit USV, it has been unknown to date whether CIR models with more than two factors can exhibit USV or not. We formally review USV and relate it to bond market incompleteness. We provide necessary and sufficient conditions for a multi-factor CIR model to exhibit USV. We then construct a class of three-factor CIR models that exhibit USV. This answers in the affirmative the above previously open question. We also show that multi-factor CIR models with diagonal drift matrix cannot exhibit USV.
\end{abstract}

\smallskip
\noindent{\bf Keywords:} multi-factor Cox--Ingersoll--Ross model, unspanned stochastic volatility, incomplete bond markets\\

\smallskip
\noindent{\bf JEL Classification:} C32, G12, G13


\section{Introduction}

Empirical evidence suggests that fixed income markets exhibit unspanned stochastic volatility (USV), i.e., that one cannot fully hedge volatility risk solely using a portfolio of bonds, see \cite{collin2002bonds,joslin2017can,trolle2009unspanned,filipovic2017linear}. One of the basic models for the term structure of interest rates is the multi-factor Cox--Ingersoll--Ross (CIR) model. While \cite{collin2002bonds} showed that no two-factor CIR model can exhibit USV, it has been unknown to date whether CIR models with more than two factors can exhibit USV or not.

In this paper, we give necessary and sufficient conditions for a multi-factor CIR model to exhibit USV. These conditions reveal that multi-factor CIR models do not exhibit USV in general. We show that the number of USV factors in a $d$-factor CIR model is limited by $d-2$. For $d=2$ this confirms the finding of \cite{collin2002bonds}. We then construct a class of three-factor CIR models that exhibit USV. This answers in the affirmative whether CIR models with more than two factors can exhibit USV or not.

The first systematic analysis of USV in affine term structure models was done in \cite{collin2002bonds}. They also give empirical evidence for state variables that drive innovations in interest rate derivatives, but do not affect innovations in the term structure of bond prices. They then identify a class of affine term structure models that can exhibit USV. In a similar vein, \cite{joslin2017can} characterizes a large class of affine term structure models with USV. He shows that the USV condition implies cutting edge restrictions on the model parameters. This is in contrast to the linear-rational term structure models introduced in \cite{filipovic2017linear} that can generically exhibit USV. A specific affine term structure model for commodities that exhibits USV was introduced in \cite{trolle2009unspanned}. Our paper complements this literature, as the USV models in \cite{collin2002bonds,joslin2017can,trolle2009unspanned,filipovic2017linear} do not contain the multi-factor CIR models.

The structure of the paper is as follows. In Section~\ref{secUSV} we formally define USV and relate it to bond market incompleteness in a multi-factor short rate model. In Section~\ref{secCIR} we provide necessary and sufficient conditions for USV in a multi-factor CIR model. In Section~\ref{sec3CIR} we construct a three-factor CIR model that exhibits USV. In Section~\ref{secdiag} we show that multi-factor CIR models with diagonal drift matrix cannot exhibit USV.

\section{USV in Multi-Factor Short Rate Models}\label{secUSV}

Throughout we fix a filtered probability space $(\Omega,\Fcal,\Fcal_t,\Q)$ where $\Q$ denotes the risk-neutral pricing measure. We consider a multi-factor short rate model in the following sense, see, e.g., \cite{duf_kan_96}. Let $E\subset\R^d$ be a convex state space for some $d\in \N$. Let $X$ be an $E$-valued Markov diffusion factor process of the form
\begin{equation}\label{srm1}
dX_t=b(X_t)dt+ \sigma(X_t) dW_t,
\end{equation}
for some functions $b:E\to \R^d$ and $\sigma:E\to \R^{d\times d}$, and where $W$ is a $d$-dimensional Brownian motion.
We assume throughout that $\sigma(X_t)$ is invertible $dt\otimes d\Q$-a.e., and that the support of $X_t$ is all of $E$ for every $t>0$.
The short rate is given by
\begin{equation}\label{srm2}
r_t=\rho(X_t)
\end{equation}
for some function $\rho:E\to\R$. Due to the Markov property of $X$, the price at time $t\le T$ of a zero-coupon bond maturing at time $T$ is given by
\begin{equation}\label{Fdef}
P(t,T)= \E[e^{-\int_t^T r_s ds}|\mathcal{F}_t] =F(T-t,X_t),
\end{equation}
for some function $F$ on $\R_+\times E$, which we assume to be $C^{1,2}$.

We now define the concept of term structure factors. We call $\xi\in\R^d\setminus\{0\}$ an unspanned direction if the term structure of bond prices $P(t,T)$, $T\ge t$, is unaffected by perturbations of $X_t$ along $\xi$. The linear span of all unspanned directions is called the term structure kernel and is denoted $\Ucal$. It is given by
\begin{equation}\label{Udef}
\Ucal = \bigcap_{\tau\ge0,\, x\in E} \ker \nabla_x F(\tau,x)^\top,
\end{equation}
where $\cdot^\top$ denotes the transpose, see also \cite{filipovic2017linear}. Let $m=d-\dim\Ucal\ge 0$, and fix a linear map $S:\R^d\to\R^m$ such that $\ker S=\Ucal$. In view of Lemma~\ref{lemfrepr} there exists a $C^{1,2}$-function $\tilde F$ on $\R_+\times S(E)$ such that $F(\tau,x)=\tilde F(\tau, S x)$ for all $\tau\ge 0$, $x\in E$. Defining
\begin{equation}\label{Zdef}
  Z_t = SX_t  ,
\end{equation}
it follows that the zero-coupon bond prices can be rewritten as
\begin{equation}\label{P(t,T)=G(T-t,Zt)}
P(t,T)= \tilde F(T-t, Z_t),
\end{equation}
so that, at any fixed time $t$, the term structure $P(t,T)$, $T\ge t$, is a function of $Z_t$ only. Note that also $r_t=\tilde\rho(Z_t)$ is a function of $Z_t$ only, where $\tilde\rho(z)=-\partial_\tau \tilde F(\tau,z)|_{\tau=0}$.\footnote{This uses that $\partial_T P(t,T)|_{T=t}=r_t$, which holds under mild assumptions, for instance continuity of $r$ and uniform integrability of $\{\tau^{-1}(\exp(-\int_t^{t+\tau} r_sds)-1)\mid \tau\in(0,\varepsilon)\}$ for all $t\ge0$ and some $\varepsilon>0$ that may depend on $t$. In particular, this holds in the CIR model.} This motivates the following terminology.
\begin{definition}
We refer to $Z_t$ as term structure factors and, accordingly, to $U_t=L X_t$ as unspanned factors, for any linear map $L:\R^d\to\R^{d-m}$ such that $L^\top\R^{d-m}=\Ucal$.
\end{definition}

We next show that the existence of unspanned directions, $\dim\Ucal >0$, can give rise to bond market incompleteness in the sense that not all European claims on the term structure can be replicated by solely trading in bonds and the money-market account. In view of \eqref{P(t,T)=G(T-t,Zt)}, any such claim has a payoff of the form $\Phi(Z_T)$ at some $T$. Due to the Markov property of $X$, the price at time $t\le T$ is given by
\[ \Pi_t = \E[e^{-\int_t^T r_s ds} \Phi(Z_T)|\mathcal{F}_t] = G(t,X_t)\]
for some function $G$ on $[0,T]\times E$. If $G$ is $C^{1,2}$, we say that $\Phi(Z_T)$ is a regular claim. In this case It\^o's formula yields
\begin{equation} \label{Pi_t}
d\Pi_t = r_t \Pi_t dt + \nabla_x G(t,X_t)^\top \sigma(X_t)dW_t.
\end{equation}
On the other hand, it follows from \eqref{Zdef} and \eqref{P(t,T)=G(T-t,Zt)} that the value process $V$ of any self-financing trading strategy in bonds and the money-market account is of the form
\begin{equation} \label{V_t}
dV_t = r_t V_t dt + \theta_t^\top S\sigma(X_t)dW_t,
\end{equation}
where $\theta$ is an $\R^m$-valued progressively measurable process.

As $\sigma(X_t)$ is invertible $dt\otimes d\Q$-a.e.\ and the support of $X_t$ is all of $E$ for every $t>0$, we infer from \eqref{Pi_t} and \eqref{V_t} that a regular claim $\Phi(Z_T)$ can be replicated if and only if $\nabla_x G(t,x)\in S^\top\R^m$ for all $t\in[0,T]$, $x\in E$. Combining this with Lemma~\ref{lemfrepr} we obtain the following result.

\begin{lemma}\label{lemRepReg}
A regular claim $\Phi(Z_T)$ can be replicated if and only if
\[ \E[e^{-\int_t^T r_s ds} \Phi(Z_T)|\mathcal{F}_t] = \tilde G(t,Z_t)\]
for some $C^{1,2}$-function $\tilde G$ on $[0,T]\times S(E)$. This holds in particular if $Z$ itself is a Markov process.
\end{lemma}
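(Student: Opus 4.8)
The plan is to establish the two stated equivalences and the sufficient condition in turn, using the replication criterion derived just above, namely that a regular claim $\Phi(Z_T)$ is replicable if and only if $\nabla_x G(t,x)\in S^\top\R^m$ for all $t\in[0,T]$ and $x\in E$. The first step is to translate this gradient condition into a statement about factorization of $G$ through the map $S$. Since $\ker S=\Ucal$ and $S:\R^d\to\R^m$ is surjective onto its image, the condition $\nabla_x G(t,x)\in S^\top\R^m$ for every $x$ says precisely that, for each fixed $t$, the directional derivative of $G(t,\cdot)$ vanishes along every vector in $\Ucal$. By the same reasoning that underlies Lemma~\ref{lemfrepr} (which I am free to invoke), a $C^{1,2}$-function on $[0,T]\times E$ whose $x$-gradient annihilates $\Ucal$ everywhere must be of the form $G(t,x)=\tilde G(t,Sx)$ for some function $\tilde G$ on $[0,T]\times S(E)$; moreover $\tilde G$ inherits the $C^{1,2}$ regularity of $G$ because $S$ has a bounded right inverse and $S(E)$ is convex. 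Conversely, if $G(t,x)=\tilde G(t,Sx)$ with $\tilde G\in C^{1,2}$, then $\nabla_x G(t,x)=S^\top \nabla_z\tilde G(t,Sx)\in S^\top\R^m$ automatically. Combining this factorization statement with the definition $Z_t=SX_t$ and the pricing formula $\Pi_t=G(t,X_t)$ yields exactly the claimed equivalence: replicability holds iff $\E[e^{-\int_t^T r_s\,ds}\Phi(Z_T)\mid\Fcal_t]=\tilde G(t,Z_t)$ for some $C^{1,2}$-function $\tilde G$ on $[0,T]\times S(E)$.

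For the final assertion — that this holds in particular when $Z$ is itself a Markov process — the argument is as follows. If $Z$ is Markov, then the conditional expectation $\E[e^{-\int_t^T r_s\,ds}\Phi(Z_T)\mid\Fcal_t]$ depends on $\Fcal_t$ only through $Z_t$, because the integrand is a functional of the future path of $Z$ alone: recall from the preceding discussion that $r_s=\tilde\rho(Z_s)$ is a function of $Z_s$, and $\Phi(Z_T)$ is manifestly a function of $Z_T$. Hence there is a measurable function $\tilde G$ with $\Pi_t=\tilde G(t,Z_t)$. One then compares this with $\Pi_t=G(t,X_t)$ and uses that the support of $X_t$ is all of $E$, so $\tilde G(t,Sx)=G(t,x)$ for all $x\in E$; since $G$ is $C^{1,2}$ by the regularity assumption on the claim and $S$ has a bounded right inverse on the convex set $S(E)$, the function $\tilde G$ is $C^{1,2}$ as well. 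The conclusion then follows from the equivalence just established.

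The main obstacle, and the point deserving the most care, is the regularity bookkeeping in the factorization step: passing from "the $x$-gradient of $G$ annihilates $\Ucal$" to "$G$ factors as $\tilde G\circ(\mathrm{id}\times S)$ with $\tilde G$ of class $C^{1,2}$" requires that one can choose a smooth (indeed linear) right inverse of $S$, restrict $G$ to an affine complement of $\Ucal$, and verify that differentiating along that complement reproduces all derivatives of $\tilde G$ up to the required order on the convex domain $S(E)$. This is essentially the content of Lemma~\ref{lemfrepr}, so in the write-up I would state the factorization as a direct consequence of that lemma applied with the parameter $t$ carried along, and devote a sentence to why the parametrized version still yields joint $C^{1,2}$ regularity. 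The Markov-property half is then a short and routine argument given that reduction.
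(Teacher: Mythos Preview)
Your proposal is correct and follows the same approach as the paper: the paper's proof consists solely of the sentence ``Combining this with Lemma~\ref{lemfrepr} we obtain the following result,'' where ``this'' is the replication criterion $\nabla_x G(t,x)\in S^\top\R^m$ derived just before the lemma. You have simply spelled out the details of that combination, including the parametrized application of Lemma~\ref{lemfrepr} and the Markov argument, at a higher level of rigor than the paper itself provides.
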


The concept of unspanned stochastic volatility (USV) is now made precise by the following definition.

\begin{definition} \label{defUSV}
The $d$-factor short rate model~\eqref{srm1}--\eqref{srm2} exhibits USV if the bond market is incomplete in the sense that there exists a regular claim $\Phi(Z_T)$ that cannot be replicated.
\end{definition}

\begin{remark}
Different choices of $S$ lead to term structure factors $Z_t$ that are related by linear bijections of $\R^m$. Thus the definition of USV does not depend on the specific choice of $S$.
\end{remark}

As empirical evidence suggests that fixed income markets exhibit USV, it is also a desirable feature of term structure models. However, it turns out that multi-factor short rate models do not generically exhibit USV. Indeed, if $m=d$, then $Z$ is a linear bijective transformation of $X$. In this case there are no unspanned directions, let alone USV. Even if $m<d$, then there are unspanned directions but not necessarily USV. To see this, let $r_t=Z_t$, where $X=(Z,U)$ is a bivariate CIR process (see following section) with independent components. In particular $Z$ is a Markov process, so that in view of Lemma~\ref{lemRepReg}, the model does not exhibit USV. This agrees with the intuition that $U$ is an irrelevant factor that has no influence whatsoever on the term structure.

\section{Multi-Factor CIR Models and USV}\label{secCIR}

An important example of a multi-factor short rate model \eqref{srm1}--\eqref{srm2} is the Cox--Ingersoll--Ross (CIR) model, see e.g.\ \cite{MR2553163} for details. The $d$-factor CIR model consists of the $E=\mathbb{R}^d_+$-valued square-root diffusion factor process $X$ with dynamics of the form
\begin{equation}\label{cir1}
dX_t=(b+\beta X_t)dt+ {\rm diag}(\sigma_1 \sqrt{X_{1t}}, \dots, \sigma_d \sqrt{X_{dt}})dW_t,
\end{equation}
for some $b \in \mathbb{R}^d_+,$ $\beta \in \mathbb{R}^{d \times d}$ with nonnegative off-diagonals, $\beta_{ij} \geq 0$ for $i \neq j$, and $\sigma_i> 0$. Here $X_{1t},\ldots,X_{dt}$ denote the components of $X_t$. The short rate is given by
\begin{equation}\label{cir2}
r_t=\rho^\top X_t
\end{equation}
for some parameter $\rho \in \mathbb{R}^d_+\setminus\{0\}$. The price at time $t$ of a zero-coupon bond maturing at time $T$ is given by \eqref{Fdef} with the exponential affine function of the form
\[ F(\tau,x)=e^{-A(\tau)-B(\tau)^\top x}.\]
The $\R$- and $\R^d$-valued functions $A(\tau)$ and $B(\tau)$ solve the Riccati equations
\begin{equation}\label{Riccati}
\begin{aligned}
\partial_\tau A(\tau)&=b^\top B(\tau),&A(0)=0, \\
\partial_\tau B(\tau)&=H(B(\tau)), \quad&B(0)=0,
\end{aligned}
\end{equation}
where we define the map
\begin{equation*}
H: \mathbb{R}^d \to \mathbb{R}^d, \quad H(v)=-\frac{1}{2} \sigma^2 \circ v \circ v +\beta^\top v + \rho,
\end{equation*}
where $\circ$ denotes component-wise multiplication (Hadamard product) and $\sigma^2=(\sigma_1^2,\dots,\sigma_d^2)^\top$. The term structure kernel \eqref{Udef} becomes
\[ \Ucal = \bigcap_{\tau\ge0} \ker B(\tau)^\top.\]
Let $m=d-\dim\Ucal$ and $S:\R^d\to\R^m$ be a linear map with $\ker S=\Ucal$ as above.\footnote{The function $\tilde F$ in \eqref{P(t,T)=G(T-t,Zt)} can be chosen as $ \tilde F(\tau,z) = \exp(-A(\tau)-B(\tau)^\top Q  z)$ where $Q=S^\top(SS^\top)^{-1}:\R^m\to\R^d$ so that $ B(\tau)^\top QS=B(\tau)^\top$.}
An equivalent condition for USV in the CIR model is given by the following result.

\begin{theorem}\label{thm2.1}
The $d$-factor CIR model \eqref{cir1}--\eqref{cir2} exhibits USV if and only if
\begin{equation}\label{conditionLemmaNEW}
H(S^\top \mathbb{R}^m) \not\subseteq S^\top \mathbb{R}^m.
\end{equation}
\end{theorem}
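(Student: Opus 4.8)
The plan is to characterize replicability of regular claims via the Markov structure of the term structure factors $Z_t = SX_t$ and then translate this into the algebraic condition \eqref{conditionLemmaNEW}. By Lemma~\ref{lemRepReg} together with Definition~\ref{defUSV}, the model fails to exhibit USV precisely when \emph{every} regular claim $\Phi(Z_T)$ can be replicated; a sufficient condition for this is that $Z$ itself is a Markov process. So the first step is to show that in the CIR setting, $Z$ is Markov if and only if $H(S^\top\R^m)\subseteq S^\top\R^m$, i.e.\ the invariance condition. For the ``if'' direction, I would apply It\^o's formula to $Z_t = SX_t$ using \eqref{cir1}: the drift is $S(b+\beta X_t)$ and the diffusion matrix of $Z_t$ is $S\,\mathrm{diag}(\sigma_i^2 X_{it})\,S^\top$. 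The key is that the bond-price ODE \eqref{Riccati} forces $B(\tau)\in S^\top\R^m$ for all $\tau$ (since $\ker S = \Ucal = \bigcap_\tau \ker B(\tau)^\top$, equivalently $\mathrm{span}\{B(\tau):\tau\ge0\} = S^\top\R^m$), and differentiating $B(\tau)\in S^\top\R^m$ at $\tau\to\infty$ or using the Riccati vector field shows $H$ maps this subspace into itself exactly when the quadratic term $\sigma^2\circ v\circ v$ stays in $S^\top\R^m$ for $v\in S^\top\R^m$. This invariance of the quadratic term is precisely what makes both the drift and the squared diffusion coefficients of $Z_t$ expressible as functions of $Z_t$ alone, yielding the Markov property.

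The second, harder step is the converse: if \eqref{conditionLemmaNEW} holds, exhibit a regular claim that cannot be replicated. By the replication criterion derived in the excerpt (a regular claim is replicable iff $\nabla_x G(t,x)\in S^\top\R^m$), it suffices to find $\Phi$ and some $(t,x)$ with $\nabla_x G(t,x)\notin S^\top\R^m$. I would take $\Phi$ to be (an approximation of) a short-dated option whose price, by a short-time expansion, has gradient governed by the diffusion matrix of $Z$ pushed through $S$. Concretely, pricing $\Phi(Z_T)$ at times near $T$, the leading correction to $\Phi(Z_t)$ involves the generator of $X$ acting on $x\mapsto\Phi(Sx)$, which brings in $S\,\mathrm{diag}(\sigma_i^2 x_i)\,S^\top$ and hence the term $\sum_i \sigma_i^2 x_i (S e_i)(S e_i)^\top$; choosing $\Phi$ quadratic makes $G$ explicit and its $x$-gradient contains a term proportional to $\sum_i \sigma_i^2 x_i \langle Se_i, \cdot\rangle S e_i$, which lies outside $S^\top\R^m$ for suitable $x$ exactly because $H(S^\top\R^m)\not\subseteq S^\top\R^m$ (the obstruction in $H$ is the Hadamard-square term $\sigma^2\circ v\circ v$, the same quantity). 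Verifying that $G$ is $C^{1,2}$ (so $\Phi(Z_T)$ is indeed a regular claim) is routine for polynomial $\Phi$ in the affine CIR model, since conditional moments are polynomials in $X_t$ with smooth time-dependent coefficients.

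The main obstacle I anticipate is the converse direction: one must produce an \emph{explicit, genuinely non-replicable} regular claim rather than merely argue abstractly that $Z$ fails to be Markov (failure of the Markov property does not by itself immediately give a non-replicable regular claim, because Lemma~\ref{lemRepReg} only gives Markovianity as a \emph{sufficient} condition for completeness). The cleanest route is to pin down a low-degree polynomial payoff — degree two should suffice — compute $\E[e^{-\int_t^T r_s\,ds}\Phi(Z_T)\mid\Fcal_t]$ in closed form using the affine transform formula, and check directly that its $x$-gradient escapes $S^\top\R^m$ at some point of $E$; the bridge between ``$H$ does not preserve $S^\top\R^m$'' and ``this specific gradient escapes $S^\top\R^m$'' is the only place where real care is needed, and it hinges on isolating the common quadratic/Hadamard-square term in both expressions.
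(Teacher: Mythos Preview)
Your overall logical structure is sound, but your route differs substantially from the paper's, and the paper's is considerably more efficient.

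The paper works entirely with \emph{exponential} claims $\Phi(Z_T)=e^{v^\top Z_T}$ and the affine transform formula. For any $v\in\R^m$ one has
\[
\E_x\!\left[e^{-\int_0^t r_s\,ds}\,e^{v^\top Z_t}\right]=e^{\phi(t,S^\top v)+\psi(t,S^\top v)^\top x},
\]
where $\psi$ solves $\partial_\tau\psi=H(\psi)$ with $\psi(0,S^\top v)=S^\top v$. The $x$-gradient of this price is the scalar multiple $\psi(t,S^\top v)$. Hence: if the model does \emph{not} exhibit USV, these claims are replicable, so by the replication criterion $\psi(t,S^\top v)\in S^\top\R^m$ for $t$ near $0$; differentiating at $t=0$ gives $H(S^\top v)\in S^\top\R^m$. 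Conversely, if $H(S^\top\R^m)\subseteq S^\top\R^m$, the flow of $H$ leaves $S^\top\R^m$ invariant, so $\psi(t,S^\top v)\in S^\top\R^m$ for all $t$, and the left-hand side depends on $x$ only through $Sx$; this is precisely the Laplace-transform characterization of $Z$ being Markov, and Lemma~\ref{lemRepReg} finishes. Both directions are three lines.

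Your plan replaces both halves by heavier machinery. For the direction ``invariance $\Rightarrow$ $Z$ Markov'' you propose It\^o's formula and checking that $S(b+\beta x)$ and $S\,\mathrm{diag}(\sigma_i^2 x_i)\,S^\top$ are functions of $Sx$. This works (one first separates $H$ into its homogeneous pieces to get $\beta^\top(S^\top\R^m)\subseteq S^\top\R^m$ and $\sigma^2\circ v\circ v\in S^\top\R^m$ for $v\in S^\top\R^m$, then polarizes the latter), but you then still owe a well-posedness argument for the resulting SDE on $S(\R^d_+)$ to conclude Markovianity; the paper sidesteps this entirely via the transform. For the converse you propose quadratic payoffs and a short-time expansion of the generator. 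This also works---your intuition that the obstruction in $H$ always sits in the Hadamard-square term is correct, because $H(B(\tau))=B'(\tau)\in S^\top\R^m$ together with $\mathrm{span}\{B(\tau)\}=S^\top\R^m$ forces $\beta^\top$ to preserve $S^\top\R^m$ once the quadratic part does---but the bookkeeping (isolating the $\sum_i\sigma_i^2(S^\top MS)_{ii}\,e_i$ term, controlling the competing $\beta^\top S^\top MSx$ contribution, and arguing the $O(\epsilon^2)$ remainder does not cancel the order-$\epsilon$ escape) is real work. With exponential claims all of this collapses: the gradient \emph{is} $\psi$, and $\partial_\tau\psi|_{\tau=0}=H(S^\top v)$ is exactly the quantity you want.

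One small muddle to clean up: the phrase ``differentiating $B(\tau)\in S^\top\R^m$ at $\tau\to\infty$'' does not yield invariance of $H$ on the whole subspace (the curve $\{B(\tau)\}$ is one-dimensional); in that paragraph invariance is your \emph{hypothesis}, and what you are doing is decomposing it into its linear and quadratic pieces, not deriving it.
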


\begin{proof}
For any $u \in \R^d$, let $\phi(\tau, u)$ and $\psi (\tau, u)$ be the solution of the following system of Riccati differential equations:
\begin{align*}
\partial_\tau\phi(\tau,u)&=b^\top \psi(\tau,u), & \phi(0,u)=0,\\
\partial_\tau\psi(\tau,u)&=H(\psi(\tau,u)), \quad& \psi(0,u)=u,
\end{align*}
so that $A(\tau)=-\phi(\tau,0)$ and $B(\tau)=-\psi(\tau,0)$. Then, for any $x \in \mathbb{R}^d$, $v \in \R^m$, and $t\ge0$ such that the left-hand side is finite, we have
\begin{align}\label{affine}
\mathbb{E}_x\left[ e^{-\int_0^t r_s ds} e^{v^\top Z_t}\right]=\mathbb{E}_{x}\left[ e^{-\int_0^t r_s ds} e^{(S^\top v)^\top X_t}\right]
=e^{\phi(t,S^\top v)+\psi(t,S^\top v)^\top x},
\end{align}
where in the last equality we apply the affine property of $X$; see e.g.~\cite{MR1994043}. This shows in particular that $\Phi(Z_t)=e^{v^\top Z_t}$ is a regular claim.

For any $v\in\R^m$, there is an open interval $I\subset \R$ containing zero such that \eqref{affine} holds for all $t\in I$. If the CIR model \eqref{cir1}--\eqref{cir2} does not exhibit USV then the last quantity in \eqref{affine} depends on $x$ only through the value of $z=Sx$. Perturbing $x$ by elements of $\ker S$, we thus obtain
\begin{equation*}
 \psi(t,S^\top v) \in (\ker S)^\perp = S^\top \mathbb{R}^m, \qquad t\in I.
\end{equation*}
This implies $H(S^\top v)=\partial_\tau\psi(\tau,S^\top v)|_{\tau=0} \in S^\top \mathbb{R}^m$, hence \eqref{conditionLemmaNEW} does not hold.

Conversely, if \eqref{conditionLemmaNEW} does not hold, then $\psi(t,S^\top v)$ lies in $S^\top \R^m$ for all $t\ge0$ and $v\in\R^m$ such that this quantity exists, and is therefore equal to $S^\top \tilde\psi(t,v)$ for some $\tilde\psi(t,v)\in\R^m$. Thus the left-hand side of \eqref{affine} is a function of $z=Sx$ only. This shows that $Z$ is a Markov process, and hence the CIR model \eqref{cir1}--\eqref{cir2} does not exhibit USV.
\end{proof}

Theorem \ref{thm2.1} yields an important corollary, which shows that a CIR model needs at least two term structure factors in order to exhibit USV.

\begin{corollary}\label{cor1}
Whether the $d$-factor CIR model \eqref{cir1}--\eqref{cir2} exhibits USV or not depends on the model parameters $\sigma^2$, $\beta$, and $\rho$ only. There can be at most $d-2$ USV factors, so that necessarily the number of term structure factors satisfies $m \geq 2$.
\end{corollary}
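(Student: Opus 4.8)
The plan is to derive both assertions from Theorem~\ref{thm2.1}, which identifies USV with the condition $H(S^\top\R^m)\not\subseteq S^\top\R^m$.

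\emph{Dependence on parameters.} I would first note that $H$, by its very definition, is built only from $\sigma^2$, $\beta$, and $\rho$, and that $B$ solves the Riccati equation $\partial_\tau B=H(B)$, $B(0)=0$, in which $b$ does not appear ($b$ affects only $A$, via the first equation in \eqref{Riccati}). Hence $\Ucal=\bigcap_{\tau\ge0}\ker B(\tau)^\top$ is determined by $\sigma^2$, $\beta$, $\rho$, and so is the subspace $S^\top\R^m=\Ucal^\perp$; the particular choice of $S$ is irrelevant by the remark after Definition~\ref{defUSV}. Therefore whether \eqref{conditionLemmaNEW} holds, equivalently whether the model exhibits USV, depends on $\sigma^2$, $\beta$, $\rho$ only.

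\emph{At least two term structure factors.} Here I would argue by contradiction. Writing $\Ucal^\perp=\mathrm{span}\{B(\tau):\tau\ge0\}$ and using $\partial_\tau B(\tau)|_{\tau=0}=H(0)=\rho\neq0$, one sees at once that $m\ge1$ (if $m=0$ then $B\equiv0$, which is impossible). Suppose then that $m=1$ and that the model exhibits USV. Choose a unit vector $w$ spanning $\Ucal^\perp$, so that we may take $S=w^\top$ and hence $S^\top\R^m=\R w$. Since every $B(\tau)$ lies in $\R w$, write $B(\tau)=\gamma(\tau)\,w$; the Riccati equation then gives $H(\gamma(\tau)\,w)=\gamma'(\tau)\,w\in\R w$ for all $\tau\ge0$, and since $\gamma(0)=0$ while $\gamma'(0)\,w=\rho\neq0$, the range of $\gamma$ contains an open interval around $0$. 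Hence $H(tw)\in\R w$ for infinitely many $t\in\R$. The crucial observation is then that $t\mapsto H(tw)=-\tfrac12(\sigma^2\circ w\circ w)\,t^2+(\beta^\top w)\,t+\rho$ is an $\R^d$-valued polynomial of degree at most two, so a polynomial taking values on the line $\R w$ at infinitely many points must have all of its coefficients on $\R w$; therefore $H(tw)\in\R w$ for \emph{every} $t\in\R$, i.e.\ $H(S^\top\R^m)\subseteq S^\top\R^m$. This contradicts \eqref{conditionLemmaNEW}, hence, via Theorem~\ref{thm2.1}, contradicts the assumed USV. Thus $m\ge2$, and since the number of unspanned (hence USV) factors equals $\dim\Ucal=d-m$, there can be at most $d-2$ of them.

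I expect the main obstacle to be precisely this last step: upgrading the \emph{local} fact that $H$ maps a segment of the line $\R w$ into $\R w$ to the \emph{global} fact that $H$ maps all of $\R w$ into $\R w$. This is exactly where the quadratic (polynomial) structure of the CIR Riccati vector field $H$ is used; for a general short rate model the analogous step would fail, consistent with USV being generic in other model classes.
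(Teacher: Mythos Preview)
Your proof is correct and follows essentially the same route as the paper: both parts argue via Theorem~\ref{thm2.1}, the second by contradiction assuming $m=1$, identifying $S^\top\R$ with the line through $\rho$, using the Riccati equation to see that $H$ maps a segment of this line into itself, and then extending to the whole line. The only cosmetic difference is that the paper invokes analyticity of $H$ for the extension step, whereas you use the sharper observation that $t\mapsto H(tw)$ is a quadratic polynomial; both yield the same conclusion.
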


\begin{proof}
The first statement follows directly from Theorem~\ref{thm2.1} and the fact that $H$ and $B$ only depend on $\sigma^2$, $\beta$, and $\rho$. For the second statement, we argue by contradiction and suppose that the $d$-factor CIR model exhibits USV with $m=1$. As $\rho=\partial_\tau B(\tau)|_{\tau=0}$ this implies
\begin{equation*}
S^\top \mathbb{R} =\text{span}(\rho),
\end{equation*}
and hence $\{ B(\tau)\mid\tau\ge 0\}\supset\{ s\rho\mid s\in I\}$ for some open interval $I\subset \R$ containing zero. Let $\xi\perp S^\top \mathbb{R}$. Then
\[ \xi^\top H(B(\tau))= \xi^\top \partial_\tau B(\tau) = 0\quad\text{for all $\tau\ge 0$},\]
and hence $\xi^\top H(s\rho)=0$ for all $s\in I$. As $H(w)$ is an analytic function of $w\in\R^d$ we conclude that $\xi^\top H(s\rho)=0$ for all $s\in\R$ and hence \eqref{conditionLemmaNEW} does not hold, which shows that USV fails.
\end{proof}

A simple consequence is stated in the following corollary, which confirms the finding of \cite{collin2002bonds}.
\begin{corollary}\label{corollary}
There exists no two-factor CIR model that exhibits USV.
\end{corollary}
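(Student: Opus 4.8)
The plan is to obtain the statement as an immediate consequence of Corollary~\ref{cor1}. I would start by fixing an arbitrary two-factor CIR model, so that $d=2$. By construction $m = d - \dim\Ucal$ with $\dim\Ucal \ge 0$, so $m \le 2$ always. On the other hand, Corollary~\ref{cor1} gives $m \ge 2$. Combining the two inequalities forces $m = 2 = d$, equivalently $\dim\Ucal = 0$; that is, the term structure kernel is trivial and there are no unspanned directions.

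It then remains to note that a model with $m = d$ cannot exhibit USV. One clean way is to appeal directly to Theorem~\ref{thm2.1}: when $m = d$ the linear map $S:\R^d\to\R^m$ is a bijection, hence $S^\top\R^m = \R^d$, and then trivially $H(S^\top\R^m) = H(\R^d) \subseteq \R^d = S^\top\R^m$, so condition~\eqref{conditionLemmaNEW} fails and the model does not exhibit USV. (Alternatively, this is exactly the observation recorded in the discussion following Definition~\ref{defUSV}, where $Z$ is a linear bijective transformation of $X$.) Since the two-factor CIR model was arbitrary, no two-factor CIR model exhibits USV.

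I do not expect any genuine obstacle in this proof: all of the substance is already carried by Corollary~\ref{cor1}, and the only extra ingredients are the elementary dimension bound $m \le d$ and the trivial fact that a bijective $S$ renders the USV criterion vacuous. If anything needs care, it is simply making explicit that $m \ge 2$ together with $m \le d = 2$ leaves no room for a nontrivial unspanned subspace.
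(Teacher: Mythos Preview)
Your proposal is correct and mirrors the paper exactly: the paper gives no separate proof, presenting the statement simply as ``a simple consequence'' of Corollary~\ref{cor1}. One small clarification worth making explicit: the bound $m\ge 2$ in Corollary~\ref{cor1} is established \emph{under the hypothesis that the model exhibits USV} (see its proof), so your argument is cleanest when framed as a contradiction---assume USV, invoke Corollary~\ref{cor1} to get $m\ge 2$, combine with $m\le d=2$ to force $m=d$, and then conclude (via Theorem~\ref{thm2.1} or the remark after Definition~\ref{defUSV}) that $m=d$ rules out USV.
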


An example of an alternative two-factor Markov model of the term structure that exhibits USV is given in \cite[Section II]{filipovic2017linear}.


\section{A Three-Factor CIR Model With USV}\label{sec3CIR}

We construct a three-factor CIR model that exhibits USV. Corollary~\ref{corollary} indicates that the dimension $d=3$ is the first nontrivial case that can be considered. Following Corollary \ref{cor1}, we aim at constructing a model with $m=2$ term structure factors. Here is our main result.

\begin{theorem}\label{thmEX}
The three-factor CIR model with $\sigma_i= \sqrt{2}$, $i=1,2,3$,
\[ \beta=
\left(
\begin{array}{ccc}
\beta_{11}&0&\beta_{13}\\
0 &\beta_{22} &\beta_{23}\\
0 & 0 &\beta_{33}
\end{array}
\right),\]
for parameters
\begin{equation}\label{betas}
\begin{aligned}
&\beta_{22} < \beta_{11} < 0,\quad \beta_{23}>0,\\
&\beta_{13}=\frac{8\rho_2}{\beta_{11}-\beta_{22}}+\beta_{23}-2\beta_{22},\\
&\beta_{33}=\beta_{11}+\beta_{22}-\frac{1}{2}(\beta_{13}+\beta_{23}),
\end{aligned}
\end{equation}
and
\begin{equation}\label{rho}
\rho= ( \rho_1, \rho_2, \rho_1 +\rho_2)^\top
\end{equation}
for parameters $\rho_2>0$ and
\begin{equation}\label{rho1}
 \rho_1=\frac{1}{8}(\beta_{11}-\beta_{22}) (\beta_{13}-\beta_{23}-2\beta_{11}),
\end{equation}
exhibits USV. Linear maps $S:\R^3\to\R^2$ and $L:\R^3\to\R$ with $\ker S=\Ucal$ and $L^\top\R=\Ucal$ are given by
\[  S=\begin{pmatrix}
1&0&1\\
0 &1 &1
\end{pmatrix},\quad L=\begin{pmatrix} 1 & 1 & -1\end{pmatrix}.
\]
The corresponding term structure and unspanned factors are $Z_t=SX_t=(X_{1t}+X_{3t}, X_{2t}+X_{3t})^\top$ and $U_t =LX_t = X_{1t}+X_{2t}-X_{3t}$. Note that $S(\R^3_+)=\R^2_+$, so that $Z$ is $\R^2_+$-valued, and $U\le Z_1+Z_2$.
\end{theorem}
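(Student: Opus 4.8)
The plan is to apply Theorem~\ref{thm2.1}: it is enough to check that (i) the displayed $S$ really has $\ker S=\Ucal$, equivalently $\text{span}\{B(\tau):\tau\ge0\}=S^\top\R^2$ (so $m=2$), and (ii) $H(S^\top\R^2)\not\subseteq S^\top\R^2$. First I would record the linear algebra: $S^\top\R^2=\{v\in\R^3:v_3=v_1+v_2\}=\ker L$ with $L=(1\ \,1\ \,-1)$, and $\ker S=L^\top\R=\text{span}\{(1,1,-1)^\top\}$. Since $\sigma_i=\sqrt2$, the Riccati field splits triangularly,
\[
H_1(v)=-v_1^2+\beta_{11}v_1+\rho_1,\qquad H_2(v)=-v_2^2+\beta_{22}v_2+\rho_2,
\]
\[
H_3(v)=-v_3^2+\beta_{13}v_1+\beta_{23}v_2+\beta_{33}v_3+\rho_1+\rho_2,
\]
so the first two coordinates of $B$ solve the decoupled scalar Riccati equations $\partial_\tau B_i=-B_i^2+\beta_{ii}B_i+\rho_i$, $B_i(0)=0$. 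From \eqref{rho1} one gets $\rho_1=\rho_2-\tfrac14(\beta_{11}^2-\beta_{22}^2)>0$ (using $\beta_{22}<\beta_{11}<0$), and $\beta_{13}>0$ since $\beta_{22}<0<\rho_2$; hence the data define a genuine CIR model (the choice of $b\in\R^3_+$ satisfying the standing assumptions is immaterial, as $B$, and therefore USV, does not depend on it), and $B_1,B_2$, hence $B$, are globally defined and bounded.

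The crux, and the step I expect to be the real obstacle, is to prove $B_3\equiv B_1+B_2$, i.e.\ $B(\tau)\in S^\top\R^2$ for all $\tau$. Write $p=\tfrac12(\beta_{13}-\beta_{23})$; a short computation shows that \eqref{betas} and \eqref{rho1} (with $\beta_{33}$ as given) imply
\[
p+\beta_{22}=\frac{4\rho_2}{\beta_{11}-\beta_{22}},\qquad \beta_{11}-p=-\frac{4\rho_1}{\beta_{11}-\beta_{22}}.
\]
Set $g=2B_1B_2-(p+\beta_{22})B_1-(\beta_{11}-p)B_2$. Differentiating $g$ and inserting the two scalar Riccati equations, these identities are exactly what makes the cubic, quadratic and constant terms collapse, yielding
\[
\partial_\tau g=\big(\tfrac12(\beta_{11}+\beta_{22})-B_1-B_2\big)\,g,\qquad g(0)=0,
\]
whence $g\equiv0$. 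Then, letting $h=B_3-B_1-B_2$, the Riccati equation for $B_3$ together with $g\equiv0$ gives $\partial_\tau h=(\beta_{33}-2B_1-2B_2-h)\,h$ with $h(0)=0$, so $h\equiv0$. Thus $B(\tau)=(B_1,B_2,B_1+B_2)^\top$ for all $\tau$, so $L^\top\R\subseteq\Ucal$ and $\text{span}\{B(\tau)\}\subseteq S^\top\R^2$.

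For the reverse inclusion I would show that $B_1$ and $B_2$ are linearly independent as functions of $\tau$: from the scalar equations, $\partial_\tau B_i(0)=\rho_i$ and $\partial_\tau^2B_i(0)=\beta_{ii}\rho_i$, so a relation $B_2=\lambda B_1$ would force $\lambda=\rho_2/\rho_1$ and then $\beta_{22}\rho_2=\beta_{11}\rho_2$, contradicting $\beta_{22}<\beta_{11}$. Hence $\text{span}\{B(\tau):\tau\ge0\}=S^\top\R^2$, so $\Ucal=(S^\top\R^2)^\perp=\ker S=\text{span}\{(1,1,-1)^\top\}$ and $m=2$; this identifies $S$, $L$, $Z_t$ and $U_t$ exactly as stated.

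It then remains to check (ii) and conclude. For $v\in S^\top\R^2$ (so $v_3=v_1+v_2$) one computes, using $\beta_{33}$ from \eqref{betas},
\[
L\,H(v)=2v_1v_2-(p+\beta_{22})v_1+(p-\beta_{11})v_2,
\]
a nonzero polynomial on $S^\top\R^2$ (its $v_1v_2$-coefficient is $2$); for instance $L\,H((1,0,1)^\top)=-(p+\beta_{22})=-4\rho_2/(\beta_{11}-\beta_{22})\neq0$, so $H((1,0,1)^\top)\notin\ker L=S^\top\R^2$. Therefore \eqref{conditionLemmaNEW} holds, and Theorem~\ref{thm2.1} yields USV. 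The remaining assertions are immediate: $S(\R^3_+)=\R^2_+$ (inclusion $\subseteq$ is clear, and $\supseteq$ follows by taking $x=(y_1,y_2,0)$), so $Z$ is $\R^2_+$-valued, and $Z_1+Z_2-U=3X_{3t}\ge0$. Everything outside the second paragraph is bookkeeping; the delicate point is the identity for $\partial_\tau g$, which is precisely where the rigid parameter relations \eqref{betas}--\eqref{rho1} enter — no other choice of parameters makes $S^\top\R^2$ invariant under the $B$-flow.
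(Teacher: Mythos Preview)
Your proof is correct and complete. The overall architecture matches the paper's --- verify $\Ucal=\ker S$ by showing $B_3\equiv B_1+B_2$, then check \eqref{conditionLemmaNEW} via $L\,H(S^\top v)=2v_1v_2+\text{(linear)}$ --- but the key step is handled differently. The paper plugs in the closed-form solutions $B_i(\tau)=2\rho_i(e^{\theta_i\tau}-1)/[(\theta_i-\beta_{ii})(e^{\theta_i\tau}-1)+2\theta_i]$, observes that the parameter constraints force $\theta_1=\theta_2$, and then verifies $c_1B_1+c_2B_2-2B_1B_2=0$ (your $-g$) as a polynomial identity in $e^{\theta\tau}$ whose coefficients $\gamma_0,\gamma_1$ vanish precisely under \eqref{betas}--\eqref{rho1}. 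Your ODE argument --- showing $\partial_\tau g=(\tfrac12(\beta_{11}+\beta_{22})-B_1-B_2)g$ and $\partial_\tau h=(\beta_{33}-2B_1-2B_2-h)h$ with zero initial data --- is cleaner: it avoids the explicit solutions entirely and isolates exactly where each parameter relation is consumed (the identities $c_1=4\rho_2/(\beta_{11}-\beta_{22})$ and $c_2=-4\rho_1/(\beta_{11}-\beta_{22})$ kill the linear and constant terms, and $c_1+c_2=\beta_{11}+\beta_{22}$ matches the $B_1B_2$ coefficient). You also supply the reverse inclusion $\Ucal\subseteq\ker S$ explicitly via linear independence of $B_1,B_2$ from their Taylor coefficients at $\tau=0$, a point the paper takes for granted when it asserts that ``$\Ucal=\ker S$ reads $B_3=B_1+B_2$''. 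The trade-off is that the paper's route makes the role of the equal-discriminant condition $\theta_1=\theta_2$ transparent, whereas in your argument this is hidden inside the cancellation in $\partial_\tau g$.
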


Note that after component-wise scaling of the factors we can always normalize to $\sigma_i= \sqrt{2}$, $i=1,2,3$, without loss of generality. Moreover, \eqref{betas} and \eqref{rho} imply that $\beta_{13}>0,$ $ \beta_{33}<0$, and $\rho_1 >0$. Therefore, the corresponding CIR model is well defined and mean-reverting as the diagonal elements (eigenvalues) of $\beta$ are negative. While Theorem~\ref{thmEX} gives a parametric class of three-factor CIR models that exhibit USV, with four free parameters ($\beta_{11}$, $\beta_{22}$, $\beta_{23}$, $\rho_2$), the parameter constraints \eqref{betas}--\eqref{rho1} are knife-edge. This is in contrast to the linear-rational term structure models introduced in \cite{filipovic2017linear} that generically can exhibit USV.\footnote{The drift constraints in the linear-rational square-root (LRSQ) model in \cite[Section II]{filipovic2017linear} are straightforward such that the transformed process $Z_t$ has an autonomous drift.}

\begin{proof}[Proof of Theorem~\ref{thmEX}]

We have to show that $\Ucal=\ker S$ and that \eqref{conditionLemmaNEW} holds.

The condition $\Ucal=\ker S$ reads
\begin{equation} \label{eqB1+B2=B3}
B_3(\tau) = B_1(\tau) + B_2(\tau), \quad \tau\ge0,
\end{equation}
which in view of the relation $\rho=\partial_\tau B(\tau)|_{\tau=0}$ is consistent with \eqref{rho}. The assumed structure of $\beta$ enables us to rewrite \eqref{Riccati} as
\begin{align}\label{1equation_NEW}
\partial_\tau B_1(\tau)&=-B_1(\tau)^2+\beta_{11} B_1(\tau)+\rho_1,\\ \label{2equation_NEW}
\partial_\tau B_2(\tau)&=-B_2(\tau)^2+\beta_{22} B_2(\tau)+\rho_2,\\ \label{3equation_NEW}
\partial_\tau B_3(\tau)&=-B_3(\tau)^2+\beta_{13} B_1(\tau)+\beta_{23} B_2(\tau)+\beta_{33} B_3(\tau)+\rho_1+\rho_2,
\end{align}
Hence \eqref{eqB1+B2=B3} holds if and only if
\begin{align*}
\partial_\tau B_1(\tau)+\partial_\tau B_2(\tau) &= -(B_1(\tau)+B_2(\tau))^2+\beta_{13} B_1(\tau)\\
&\quad +\beta_{23} B_2(\tau) +\beta_{33} (B_1(\tau)+B_2(\tau))+\rho_1+\rho_2, \quad\tau\ge0.
\end{align*}
In view of \eqref{1equation_NEW} and \eqref{2equation_NEW}, this is equivalent to
\begin{equation}\label{conditionBi_NEW}
c_1B_1(\tau)+c_2B_2(\tau)-2B_1(\tau)B_2(\tau) = 0, \quad\tau\ge0,
\end{equation}
where $c_i=\beta_{i3} + \beta_{33} - \beta_{ii}$, $i=1,2$.

To prove that~\eqref{conditionBi_NEW} holds, we use that the solutions to \eqref{1equation_NEW} and \eqref{2equation_NEW} are given by
\[  B_i(\tau)=\frac{ 2\rho_i(e^{\theta_i\tau}-1)}{(\theta_i-\beta_{ii}) (e^{\theta_i\tau}-1) + 2\theta_i } ,\quad \theta_i=\sqrt{\beta_{ii}^2+4\rho_i},\quad i=1,2,\]
see \cite[Lemma 10.12]{MR2553163}. The form \eqref{betas} and \eqref{rho1} of $\beta$ and $\rho$ implies that $\theta=\theta_1=\theta_2$, and in order to simplify notation we write $B_i(\tau)= N_i(\tau)/D_i(\tau)$ with
\begin{align*}
N_i(\tau)&=2\rho_i(e^{\theta \tau}-1),\\
D_i(\tau)&=(\theta-\beta_{ii}) (e^{\theta\tau}-1) + 2\theta .
\end{align*}
With this notation, \eqref{conditionBi_NEW} can equivalently be written
\[
c_1N_1(\tau)D_2(\tau)+c_2N_2(\tau)D_1(\tau)-2N_1(\tau)N_2(\tau) = 0, \quad  \tau \geq 0,
\]
which upon inserting the expressions for $N_i(\tau)$ and $D_i(\tau)$ becomes
\begin{equation}\label{conditionBi3_NEW}
- \gamma_0+\gamma_1 e^{\theta \tau}+(\gamma_0-\gamma_1)e^{2\theta \tau} = 0, \quad \tau \geq 0,
\end{equation}
where
\begin{align*}
\gamma_0 &= 2c_1\rho_1(\theta+\beta_{22}) + 2c_2\rho_2(\theta+\beta_{11}) + 8\rho_1\rho_2, \\
\gamma_1 &= 4c_1\rho_1 \beta_{22} + 4c_2\rho_2\beta_{11} + 16\rho_1\rho_2.
\end{align*}
A further calculation shows that $\gamma_0=\gamma_1=0$ holds if
\begin{align*}
\beta_{13}+\beta_{23}&=2(\beta_{11}+\beta_{22}-\beta_{33}),\\
(\beta_{13}-\beta_{23})(\beta_{11}-\beta_{22})&=(\beta_{11} - \beta_{22})^2+4(\rho_1+\rho_2).
\end{align*}
This system is indeed satisfied by the model parameters $\beta$ and $\rho$ in \eqref{betas} and \eqref{rho1}. We conclude that \eqref{conditionBi3_NEW}, hence \eqref{conditionBi_NEW}, is satisfied, and hence $\Ucal=\ker S$.

It remains to verify that \eqref{conditionLemmaNEW} holds. Note that $S^\top\R^2=\ker L $. On the other hand, we have
\begin{equation*}
L H(S^\top v)=2v_1v_2+\ell(v),
\end{equation*}
for some first order polynomial $\ell(v)$ in $v$. The right hand side is certainly nonzero for some $v\in\R^2$, which shows \eqref{conditionLemmaNEW}.
\end{proof}

\begin{remark}
To see how the unspanned factor $U_t$ affects the bond return volatility, we calculate the quadratic variation of the log return, using \eqref{eqB1+B2=B3} and $U=Z_1+Z_2-3X_3$,
\begin{align*}
  \frac{1}{2}\frac{d\langle \log P(\cdot,T)\rangle_t}{dt} &= B(T-t)^\top \sigma(X_t)\sigma(X_t)^\top B(T-t)\\
  &= \sum_{i=1}^2 B_i(T-t)^2 X_{it}   + B_1(T-t)B_2(T-t) X_{3t}\\
  &= \sum_{i=1}^2 (B_i(T-t)^2-\frac{1}{3}) Z_{it}  - \frac{1}{3} B_1(T-t)B_2(T-t) U_{t}.
\end{align*}
Because $B_1 B_2 >0$, this reveals that there is USV, in line with Theorem~\ref{thmEX}.
\end{remark}

\section{CIR Models With Diagonal $\beta$}\label{secdiag}

In Theorem~\ref{thmEX} we assumed that $\beta_{13}, \beta_{23}>0$, so that $\beta$ was not diagonal. We now show that indeed there exists no CIR model with diagonal $\beta$ and USV.

Consider a $d$-factor CIR model \eqref{cir1}--\eqref{cir2} with diagonal $\beta$, which henceforth we parametrize as $\beta={\rm diag} (\beta_{1},\dots,\beta_{d})$. After component-wise scaling of $X$ we can assume that $\sigma_i= \sqrt{2}$, $i=1,\dots,d$, without loss of generality. We can also assume without loss of generality that $\rho_i>0$ for any $i=1,\dots,d$, because otherwise we could omit $X_i$ and the $d$-factor CIR model would in fact be a $(d-1)$-factor model.

The Riccati equations~\eqref{Riccati} fully decouple and the solutions $B_i$ are explicitly given by
\[  B_i(\tau)=\frac{ 2\rho_i(e^{\theta_i\tau}-1)}{(\theta_i-\beta_i) (e^{\theta_i\tau}-1) + 2\theta_i } ,\quad \theta_i=\sqrt{\beta_i^2+4\rho_i},\]
see \cite[Lemma 10.12]{MR2553163}. Note that $B_i(\tau)$ uniquely extends to an analytic function of $\tau\in\C$ with poles at $\tau\in \Scal_i$, where
\[ \Scal_i=\left\{ z\in\C\mid {\rm Re}\, z = \frac{1}{\theta_i}\log( \frac{\theta_i+\beta_i}{\theta_i-\beta_i}),\quad {\rm Im} \, z=\frac{1}{\theta_i}(2n+1)\pi,\quad n\in\Z\right\}. \]
There is a one-to-one correspondence between the sets of poles $\Scal_i$ and the parameters $(\theta_i,\beta_i)$ in the sense that $\Scal_i\cap\Scal_j\neq\emptyset$ if and only if $\Scal_i=\Scal_j$ if and only if $(\theta_i,\beta_i)=(\theta_j,\beta_j)$.
From this we draw two conclusions and our main result.

First, the functions $B_i$ are in one-to-one relation to the parameters $(\theta_i,\beta_i)$. That is, $B_i=B_j$ if and only if $(\theta_i,\beta_i)=(\theta_j,\beta_j)$, or equivalently, $(\rho_i,\beta_i)=(\rho_j,\beta_j)$. Now let $m\le d$ be number of elements of the set $\{B_1,\dots,B_d\}$. After reordering the indices, we can assume that $\{B_1,\dots,B_m\}=\{B_1,\dots,B_d\}$, so that $B_i\neq B_j$ for all $1\le i<j\le m$. Here is the second conclusion.
\begin{lemma}\label{lemLIND}
The functions $B_1,\dots,B_m$ are linearly independent.
\end{lemma}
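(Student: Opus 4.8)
The plan is to exploit the analytic structure of the $B_i$'s and, crucially, the fact (established just above the lemma) that distinct functions $B_i\neq B_j$ have disjoint sets of poles $\mathcal S_i\cap\mathcal S_j=\emptyset$. Suppose for contradiction that there is a nontrivial linear relation $\sum_{i=1}^m \lambda_i B_i \equiv 0$ on $\tau\ge 0$, with not all $\lambda_i$ zero; say $\lambda_k\neq 0$. By analytic continuation the identity $\sum_i \lambda_i B_i(\tau)=0$ holds for all $\tau\in\C$ away from the union of poles $\bigcup_i\mathcal S_i$.

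Next I would extract a contradiction by looking near a pole of $B_k$. Pick a point $z_0\in\mathcal S_k$; since $\mathcal S_k\cap\mathcal S_j=\emptyset$ for all $j\neq k$, none of the other $B_j$, $j\neq k$, has a pole at $z_0$, so they are all analytic (in particular bounded) in a punctured neighborhood of $z_0$. But $\lambda_k B_k(\tau)=-\sum_{j\neq k}\lambda_j B_j(\tau)$ stays bounded as $\tau\to z_0$, contradicting the fact that $B_k$ has a genuine pole at $z_0$ and $\lambda_k\neq 0$. (One should note $B_k$ does have an honest, non-removable pole at each point of $\mathcal S_k$: from the closed form $B_k=N_k/D_k$ the numerator $2\rho_k(e^{\theta_k\tau}-1)$ does not vanish at $z_0$ because $\rho_k>0$ and $e^{\theta_k z_0}\neq 1$ there — indeed $\mathcal S_k$ is precisely the zero set of $D_k$ minus the points where $N_k$ also vanishes, which is how $\mathcal S_i$ was defined — so the pole is real.) This yields the desired contradiction, so no nontrivial relation exists and $B_1,\dots,B_m$ are linearly independent.

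I expect the only delicate point to be confirming that each $z_0\in\mathcal S_k$ is a true pole of $B_k$ rather than a removable singularity, i.e. that the numerator and denominator of the closed-form expression do not vanish simultaneously there; but this is immediate from $\rho_k>0$ (which we may assume without loss of generality, as noted in the text) together with the explicit description of $\mathcal S_k$ as the pole set, since by construction $\mathcal S_k$ already excludes any common zeros. A minor subtlety is that $\theta_k$ could in principle be zero (when $\beta_k^2+4\rho_k=0$), but since $\rho_k>0$ we have $\theta_k>0$, so the formulas and the description of $\mathcal S_k$ are valid as stated. Everything else is a routine application of the identity theorem for analytic functions.
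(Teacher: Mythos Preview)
Your proof is correct and follows essentially the same approach as the paper: assume a nontrivial linear relation, extend it by analytic continuation to $\C\setminus\bigcup_i\mathcal S_i$, and derive a contradiction from the disjointness of the pole sets $\mathcal S_i$. The paper's version is terser (it simply notes that $f=\sum_i\zeta_iB_i$ has a pole at $\tau\in\mathcal S_i$ iff $\zeta_i\neq 0$), but your more explicit treatment of the non-removability of the poles and the case $\theta_k>0$ is a welcome elaboration of the same idea.
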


\begin{proof}
Let $\zeta\in\R^m$ be such that $f(\tau)=\sum_{i=1}^m \zeta_i B_i(\tau)=0$ for all $\tau\in [0,\infty)$. By analytic continuation, $f(\tau)=0$ for all $\tau\in\C\setminus\cup_{i=1}^m\Scal_i$. On the other hand, $f(\tau)$ has a pole at $\tau\in\Scal_i$ if and only if $\zeta_i\neq 0$. Hence $\zeta=0$.
\end{proof}

Combining the above, we arrive at our main result.
\begin{theorem}
A CIR model with diagonal $\beta$ cannot exhibit USV.
\end{theorem}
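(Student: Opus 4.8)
The plan is to combine Theorem~\ref{thm2.1} with the linear independence result of Lemma~\ref{lemLIND}. By Theorem~\ref{thm2.1}, the CIR model exhibits USV if and only if $H(S^\top\R^m)\not\subseteq S^\top\R^m$, so it suffices to show that $S^\top\R^m$ is invariant under $H$ in the diagonal-$\beta$ case. First I would identify the term structure kernel $\Ucal=\bigcap_{\tau\ge0}\ker B(\tau)^\top$ explicitly. With the indices reordered so that $B_1,\dots,B_m$ are the distinct ones among $B_1,\dots,B_d$, each $B_i$ for $i>m$ equals some $B_{j(i)}$ with $j(i)\le m$; hence $B(\tau)$ lies in the span of the vectors $e_i + \sum_{k:j(k)=i} e_k$ for $i=1,\dots,m$, and by Lemma~\ref{lemLIND} the functions $B_1,\dots,B_m$ are linearly independent, so $\{B(\tau):\tau\ge0\}$ spans exactly this $m$-dimensional subspace. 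Therefore $\dim\Ucal = d-m$, and I can take $S:\R^d\to\R^m$ to be the linear map that sends $x$ to $(x_i + \sum_{k:j(k)=i}x_k)_{i=1}^m$; equivalently, $S^\top\R^m = \mathrm{span}\{e_i + \sum_{k:j(k)=i}e_k : i=1,\dots,m\}$, which is precisely the set of vectors $v\in\R^d$ with $v_k = v_{j(k)}$ whenever $B_k = B_{j(k)}$.

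Next I would verify directly that $H$ maps this subspace into itself. Recall $H(v) = -\tfrac12\sigma^2\circ v\circ v + \beta^\top v + \rho$, and here $\sigma_i=\sqrt2$ so $\sigma^2 = (2,\dots,2)^\top$, giving $H(v)_i = -v_i^2 + \beta_i v_i + \rho_i$. The key observation is that equality of the functions $B_i$ and $B_j$ forces equality of the defining parameters: as noted in the excerpt, $B_i=B_j$ if and only if $(\rho_i,\beta_i)=(\rho_j,\beta_j)$. So if $v\in S^\top\R^m$, meaning $v_k=v_{j(k)}$ whenever $B_k=B_{j(k)}$, then for such $k$ we get $H(v)_k = -v_k^2 + \beta_k v_k + \rho_k = -v_{j(k)}^2 + \beta_{j(k)} v_{j(k)} + \rho_{j(k)} = H(v)_{j(k)}$, using both $v_k = v_{j(k)}$ and $(\rho_k,\beta_k) = (\rho_{j(k)},\beta_{j(k)})$. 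Hence $H(v)$ again satisfies the defining relations of $S^\top\R^m$, so $H(S^\top\R^m)\subseteq S^\top\R^m$, and Theorem~\ref{thm2.1} shows USV fails.

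The main obstacle, and the step requiring the most care, is the identification of $\Ucal$ — specifically, confirming that $\bigcap_{\tau\ge0}\ker B(\tau)^\top$ is exactly the orthogonal complement of $\mathrm{span}\{e_i+\sum_{k:j(k)=i}e_k\}$ rather than something larger. This is where Lemma~\ref{lemLIND} is essential: a vector $\xi$ annihilates $B(\tau)$ for all $\tau$ precisely when $\sum_{i=1}^d \xi_i B_i(\tau)\equiv 0$, i.e. when $\sum_{i=1}^m\big(\sum_{k:j(k)=i}\xi_k\big)B_i(\tau)\equiv 0$, and linear independence of $B_1,\dots,B_m$ forces $\sum_{k:j(k)=i}\xi_k = 0$ for each $i$, which is exactly orthogonality to the claimed spanning vectors. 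Once this is pinned down, the invariance of $S^\top\R^m$ under $H$ is the short computation above, and everything else follows from Theorem~\ref{thm2.1}.

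\begin{proof}
By Theorem~\ref{thm2.1}, it suffices to show that $H(S^\top\R^m)\subseteq S^\top\R^m$ for a suitable choice of $S$. Reorder the indices so that $B_1,\dots,B_m$ are pairwise distinct and $\{B_1,\dots,B_m\}=\{B_1,\dots,B_d\}$, and for each $k\in\{1,\dots,d\}$ let $j(k)\in\{1,\dots,m\}$ be the unique index with $B_k=B_{j(k)}$ (so $j(k)=k$ for $k\le m$). A vector $\xi\in\R^d$ lies in $\Ucal=\bigcap_{\tau\ge0}\ker B(\tau)^\top$ if and only if $\sum_{k=1}^d\xi_k B_k(\tau)=\sum_{i=1}^m\big(\sum_{k:\,j(k)=i}\xi_k\big)B_i(\tau)=0$ for all $\tau\ge0$; by Lemma~\ref{lemLIND} this holds if and only if $\sum_{k:\,j(k)=i}\xi_k=0$ for every $i=1,\dots,m$. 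Hence $\dim\Ucal=d-m$ and we may take $S:\R^d\to\R^m$ to be the map $(Sx)_i=\sum_{k:\,j(k)=i}x_k$, for which $\ker S=\Ucal$. Its transpose satisfies
\[
S^\top\R^m=\Big\{v\in\R^d:\ v_k=v_{j(k)}\text{ for all }k=1,\dots,d\Big\}.
\]
Now take $v\in S^\top\R^m$. Since $\sigma_i=\sqrt2$ we have $H(v)_k=-v_k^2+\beta_k v_k+\rho_k$. Recall from the correspondence between the pole sets that $B_k=B_{j(k)}$ implies $(\rho_k,\beta_k)=(\rho_{j(k)},\beta_{j(k)})$. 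Therefore, using $v_k=v_{j(k)}$,
\[
H(v)_k=-v_k^2+\beta_k v_k+\rho_k=-v_{j(k)}^2+\beta_{j(k)}v_{j(k)}+\rho_{j(k)}=H(v)_{j(k)},
\]
for every $k$, so $H(v)\in S^\top\R^m$. Thus $H(S^\top\R^m)\subseteq S^\top\R^m$, so \eqref{conditionLemmaNEW} fails, and by Theorem~\ref{thm2.1} the model does not exhibit USV.
\end{proof}
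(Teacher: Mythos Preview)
Your proof is correct. The identification of $\Ucal$ via Lemma~\ref{lemLIND} is exactly what the paper does (with the partition $I_k=\{i:j(i)=k\}$ in place of your index map $j$), and the choice of $S$ agrees. The difference lies in how the absence of USV is concluded. The paper observes that the term structure factors $Z_{kt}=\sum_{i\in I_k}X_{it}$ form a Markov process (using independence of the $X_i$ together with equality of $(\beta_i,\sigma_i)$ within each block, and citing an external result) and then applies Lemma~\ref{lemRepReg}. You instead verify the algebraic invariance $H(S^\top\R^m)\subseteq S^\top\R^m$ directly, using that $B_k=B_{j(k)}$ forces $(\rho_k,\beta_k)=(\rho_{j(k)},\beta_{j(k)})$, and then invoke Theorem~\ref{thm2.1}. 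Your route is slightly more self-contained, since it avoids the external reference for the Markov property of sums of independent square-root diffusions; the paper's route makes the probabilistic reason (autonomy of $Z$) more visible. Both arguments rest on the same two ingredients: the linear independence of the distinct $B_i$ and the parameter equality forced by $B_i=B_j$.
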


\begin{proof}
Let $\{1,\dots,d\}=I_1\cup\cdots\cup I_m$ be the partition such that $B_i=B_k$ for all $i \in I_k$, $k=1,\dots,m$. We claim that
\[ \Ucal=\{ \xi\mid \text{$\textstyle\sum_{i\in I_k} \xi_i=0$ for all $k=1,\dots,m$}\}.\]
Indeed, $\xi\in\Ucal$ if and only if $\sum_{k=1}^m (\sum_{i\in I_k} \xi_i)  B_k  =0$, so that Lemma~\ref{lemLIND} yields the claim.

Hence a linear map $S:\R^d\to\R^m$ with $\ker S=\Ucal$ is given by $S_{ki}=1$ if $i\in I_k$ and $0$ otherwise. The corresponding term structure factors $Z_t=S X_t$ are given by $Z_{kt}=\sum_{i\in I_k}X_{it}$ and form a $m$-dimensional Markov process. Indeed, this follows from the independence of $X_1,\dots,X_d$ and because $\sigma_i=\sigma_j=\sqrt{2}$ and $\beta_i=\beta_j$ for all $i,j\in I_k$, see \cite[Corollary~10.4]{MR1994043}. By Lemma~\ref{lemRepReg}, the model therefore does not exhibit USV.
\end{proof}

\begin{appendix}

\section{Auxiliary Lemma}

\begin{lemma}\label{lemfrepr}
Let $f$ be a $C^1$-function on $E$ and $S:\R^d\to\R^m$ be a linear map with full rank, for some $0\le m\le d$. The following are equivalent:
\begin{enumerate}
  \item\label{lemfrepr1} $\nabla  f(x) \in S^\top\R^m$ for all $x\in E$;
  \item\label{lemfrepr2} $\ker S\subseteq\ker\nabla  f(x)^\top $ for all $x\in E$;
  \item\label{lemfrepr3} there exists a $C^1$-function $\tilde f$ on $S(E)$ such that $f(x)=\tilde f(Sx)$ for all $x\in E$.
\end{enumerate}
In either case, for any $z_0=Sx_0\in S(E)$ we have $\tilde f(z)=f(x_0+Q(z-z_0))$ for all $z\in \R^m$ such that $x_0+Q(z-z_0)\in E$, where $Q=S^\top(SS^\top)^{-1}$.
\end{lemma}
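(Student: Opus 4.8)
The plan is to route all three equivalences through condition~\ref{lemfrepr2}, treating it as a hub, and only afterward establish the explicit formula. The equivalence \ref{lemfrepr1}$\Leftrightarrow$\ref{lemfrepr2} is pure linear algebra: since $S$ has full rank, $S^\top\R^m=\mathrm{ran}(S^\top)=(\ker S)^\perp$, and the statement $\nabla f(x)\in(\ker S)^\perp$ is precisely the statement that $\nabla f(x)^\top v=0$ for every $v\in\ker S$, i.e.\ $\ker S\subseteq\ker\nabla f(x)^\top$. This holds pointwise in $x$, so the two conditions coincide.

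For \ref{lemfrepr3}$\Rightarrow$\ref{lemfrepr2} I would avoid differentiating $\tilde f$ (whose domain $S(E)$ need not be open) and argue directly: fix $x\in E$ and $v\in\ker S$, and consider $t\mapsto f(x+tv)=\tilde f(S(x+tv))=\tilde f(Sx)$, which is constant for $t$ small enough that $x+tv\in E$; differentiating at $t=0$ yields $\nabla f(x)^\top v=0$. The heart of the lemma is the converse \ref{lemfrepr2}$\Rightarrow$\ref{lemfrepr3}. First I would record that $SS^\top$ is invertible (full rank), so $Q=S^\top(SS^\top)^{-1}$ satisfies $SQ=I_m$; hence for any $x_0\in E$ with $z_0=Sx_0$, the point $x_0+Q(z-z_0)$ is a preimage of $z$ under $S$. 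The key step is to show $f$ is constant on each fibre $E\cap S^{-1}(z)$: given $x,y\in E$ with $Sx=Sy$, the segment $\{x+t(y-x):t\in[0,1]\}$ lies in $E$ by convexity and, because $y-x\in\ker S$, maps entirely to $Sx$ under $S$; along it the directional derivative of $f$ in the direction $y-x\in\ker S$ vanishes by \ref{lemfrepr2}, so the fundamental theorem of calculus gives $f(x)=f(y)$. Here convexity of $E$ is used essentially, to guarantee the connecting segment stays inside $E$.

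With fibrewise constancy in hand I would set $\tilde f(z)=f(x)$ for any $x\in E\cap S^{-1}(z)$; this is well defined and satisfies $f=\tilde f\circ S$, and it immediately yields the closing formula, since $x_0+Q(z-z_0)$ is a valid preimage of $z$, whence $\tilde f(z)=f(x_0+Q(z-z_0))$ whenever that point lies in $E$. The main obstacle is the regularity claim that $\tilde f$ is $C^1$, made delicate by the fact that $S(E)$ may be lower-dimensional or fail to be open. I would resolve this using exactly the affine reparametrization appearing in the formula: locally around any $z_0=Sx_0$, the representation $\tilde f(z)=f(x_0+Q(z-z_0))$ exhibits $\tilde f$ as the composition of the affine map $z\mapsto x_0+Q(z-z_0)$ with the $C^1$ function $f$, hence $\tilde f$ is $C^1$ on the neighbourhood $\{z:x_0+Q(z-z_0)\in E\}$ of $z_0$, with $\nabla\tilde f(z)=Q^\top\nabla f(x_0+Q(z-z_0))$; these local pieces agree by well-definedness and therefore patch to a single $C^1$ function on $S(E)$.
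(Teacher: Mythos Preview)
Your proof is correct and follows essentially the same approach as the paper: the same linear-algebra reduction for \ref{lemfrepr1}$\Leftrightarrow$\ref{lemfrepr2}, the same convexity/segment argument for fibrewise constancy in \ref{lemfrepr2}$\Rightarrow$\ref{lemfrepr3}, and the same affine reparametrization via $Q$ for both the explicit formula and the $C^1$ regularity. The only minor difference is that for \ref{lemfrepr3}$\Rightarrow$\ref{lemfrepr2} you use a directional-derivative argument to avoid differentiating $\tilde f$ on $S(E)$, whereas the paper invokes the chain rule $\nabla f(x)=S^\top\nabla\tilde f(Sx)$ directly; your version is slightly more careful but the content is the same.
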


\begin{proof}
{\ref{lemfrepr1}}$\Leftrightarrow${\ref{lemfrepr2}}: trival.

{\ref{lemfrepr3}}$\Rightarrow${\ref{lemfrepr1}}: follows from the identity $\nabla  f(x)=S^\top \nabla \tilde f(Sx)$.

{\ref{lemfrepr2}}$\Rightarrow${\ref{lemfrepr3}}: we first claim that $f(x)=f(y)$ for all $x,y\in E$ such that $Sx=Sy$. Indeed, by convexity of $E$, we have that $x(\lambda)=\lambda x+(1-\lambda)y\in E$ for all $\lambda\in [0,1]$ and therefore
\[ \frac{d}{d\lambda} f(x(\lambda))= (x-y)^\top \nabla f(x(\lambda)) = 0\]
because $x-y\in\ker S $, which proves the claim. Hence, for any $z\in S(E)$, we can define $\tilde f(z)=f(x)$ for any $x\in E$ with $Sx=z$. The last statement of the lemma follows because $S(x_0+Q(z-z_0))=z$, which also shows that $\tilde f$ is $C^1$ on $S(E)$.
\end{proof}

\end{appendix}


\bibliographystyle{plain}
\bibliography{references}

\end{document}